\documentclass[12pt]{article}

\usepackage[margin=1.2in]{geometry}
\usepackage[T1]{fontenc}
\usepackage[utf8]{inputenc}
\usepackage{lmodern}
\usepackage{microtype}
\usepackage{amsmath,amssymb,amsthm,bm}
\usepackage{booktabs}
\usepackage{graphicx}
\usepackage{float}
\usepackage{multirow}
\usepackage[colorlinks=true,citecolor=blue,linkcolor=blue,urlcolor=blue]{hyperref}
\usepackage[authoryear,round]{natbib}
\usepackage{setspace}
\usepackage{caption}
\usepackage{subcaption}
\usepackage{threeparttable}

\numberwithin{equation}{section}
\allowdisplaybreaks[2]

\theoremstyle{plain}
\newtheorem{proposition}{Proposition}[section]

\title{\large\bfseries Do Prediction Markets Match Option Prices?\\[4pt]
       Bitcoin Threshold Evidence from Binance and Polymarket}

\author{Victoria Portnaya\\[4pt]
        \normalsize Kyiv School of Economics\\
        \normalsize Kyiv, Ukraine\\[2pt]
        \normalsize \href{mailto:vportnaia@kse.org.ua}{vportnaia@kse.org.ua}}

\date{June 2026}

\begin{document}

\maketitle
\thispagestyle{empty}

\begin{abstract}
\noindent The digitization of financial markets has produced two classes of platforms
that price, in principle, the same state-contingent payoffs: centralized
crypto-option exchanges and blockchain-based prediction markets. This paper provides
the first option-implied benchmark test of prediction-market pricing for
cryptocurrency threshold contracts. For each hour in a matched sample, I compare the
Polymarket Yes price, $P_{poly,t}$, with the discounted risk-neutral binary value
implied by a listed Binance call option on the same underlying, strike, and maturity,
$P_{fair,t}$, and study the gap $D_t = P_{poly,t} - P_{fair,t}$. In the main
September~2023 Bitcoin contract, the mean pricing gap equals 5.6 percentage points
across 214 hourly observations ($t = 6.46$, $p < 10^{-9}$). Pooling three
Binance-compatible Bitcoin threshold markets yields a mean gap of 6.3 percentage
points across 287 observations, robust to HAC and block-bootstrap inference. The gap
is persistent---with an AR(1) half-life of roughly four hours---yet mean-reverting,
consistent with slow information transmission between segmented venues rather than
mechanical noise. Cross-sectional regressions reveal that the wedge is largest at
low option-implied probabilities and long maturities, a pattern consistent with
speculative demand for prediction-market contracts rather than measurement error.
A delta-hedged arbitrage proxy remains profitable after conservative transaction
costs, though with marginal statistical precision. A Deribit extension on the same
three Bitcoin contracts produces a larger pooled gap of 11 percentage points, while
a smaller Ethereum exercise yields mixed evidence. The results demonstrate that
digital fragmentation of financial markets generates systematic, persistent pricing
wedges even for economically identical payoffs.

\bigskip
\noindent\textbf{JEL codes:} G13, G14, G19

\noindent\textbf{Keywords:} prediction markets; options pricing; risk-neutral
probabilities; Bitcoin; cross-venue arbitrage; Polymarket
\end{abstract}

\newpage
\setcounter{page}{1}
\tableofcontents
\newpage

\section{Introduction}
\label{sec:intro}

The digitization of financial markets has produced two classes of platforms that
price, in principle, the same state-contingent payoffs. On one side sit centralized
crypto-option exchanges---algorithmically matched, continuously quoted, and
integrated with global derivatives infrastructure. On the other side sit
blockchain-based prediction markets---decentralized, oracle-settled, and populated
by a participant base whose access to financial markets has been transformed by
digital platforms. Whether these two digital ecosystems agree on the value of an
identical binary payoff is an open empirical question with implications for how
digitization shapes price discovery and cross-market information transmission.

Prediction markets are increasingly treated as probability aggregators, but this
interpretation requires a benchmark. For a contract that pays one dollar if Bitcoin
closes above a fixed threshold on a fixed date, the natural benchmark is not a
survey forecast or a historical frequency---it is the market price of a financial
claim with an identical payoff. This paper exploits that observation to conduct what
is, to our knowledge, the first option-implied benchmark test of prediction-market
pricing for cryptocurrency threshold contracts.

A vanilla call option traded on a centralized exchange can be inverted to recover
the market-implied probability of a threshold event under standard no-arbitrage
pricing. Comparing that probability with the prediction-market price for the same
event yields a sharp cross-market test: if both venues price the same
state-contingent payoff efficiently, their average difference should be
indistinguishable from zero after accounting for frictions. If one market is
systematically above or below the other, the difference should show up clearly in
the data.

I implement this test using exact Binance historical option data, exact Binance spot
bars, and Polymarket trade-level data for three Bitcoin threshold contracts from
mid-2023. The empirical design is deliberately narrow. Only contracts with a precise
option counterpart---identical underlying, strike, and maturity---enter the sample.
This restriction sacrifices breadth for comparability: it rules out
contract-mapping ambiguity as an alternative explanation and ensures that any
measured wedge is a statement about pricing, not about payoff structure.

The main finding is a robust positive wedge. Polymarket Yes prices were, on average,
5.6 percentage points above Binance-implied risk-neutral values in the
September~2023 contract and 6.3 percentage points above in the pooled three-market
panel. This gap is statistically precise under classical, HAC, and block-bootstrap
inference. It is economically meaningful relative to estimated frictions. And it is
persistent over intraday horizons---with an autoregressive half-life of
approximately four hours---rather than transitory noise. These features rule out
simple explanations based on bid--ask spreads or quote-staleness.

The structure of the wedge provides additional discipline. Cross-sectional
regressions on the pooled panel show that the gap is largest when the option-implied
probability is low and when time to expiry is long. This pattern is consistent with
speculative or sentiment-driven demand for out-of-the-money prediction-market
contracts. It is difficult to reconcile with the alternative hypothesis that the
wedge is driven by liquidity premia or settlement-risk adjustments, which would
predict a different sign and magnitude pattern.

\paragraph{Related literature.}
The paper contributes to three strands of research. The first is the literature on
prediction market efficiency and information aggregation. Early work by
\citet{wolfers2004} and \citet{berg2008} establishes that prediction markets produce
well-calibrated forecasts for political and economic events, while more recent papers
study long-run efficiency properties \citep{rhode2004} and the conditions under
which prices can be interpreted as probabilities \citep{wolfers2006}. Our
contribution is to provide an option-theoretic benchmark
grounded in no-arbitrage pricing rather than ex-post calibration, allowing a sharp
cross-market test independent of realized outcomes.

The second is the literature on cross-market price discovery and limits to
arbitrage. \citet{gromb2010} and \citet{kondor2012} show that capital constraints
prevent full price equalization across markets even when the same security trades
in multiple venues. \citet{duffie2010} documents slow-moving capital as a source
of persistent mispricings in derivatives markets. Our finding of a persistent,
mean-reverting wedge with a four-hour half-life is consistent with these mechanisms:
information and capital move across the Polymarket and Binance ecosystems, but not
instantaneously and not without cost.

The closest antecedent is the literature comparing prediction-market prices with
implied probabilities from financial derivatives \citep{leigh2003, snowberg2010}.
Those papers compare prediction-market prices against alternatives constructed from
equity or options markets for political and macroeconomic events. Our setting
differs in three respects: the underlying event is a financial threshold (Bitcoin
price), the option counterpart is exact rather than approximate, and the data are
aligned at hourly frequency, allowing a time-series characterization of the wedge
rather than a single-point comparison.

The paper also connects to the literature on digitization and financial inclusion.
The expansion of digital prediction platforms has extended financial market access
to participants who would not typically engage with listed derivatives, generating a
new class of digitally native price signals. Understanding whether those signals are
aligned with option-market benchmarks is a prerequisite for evaluating whether
digital platforms genuinely aggregate information or instead reflect segmented,
potentially mispriced demand.

\paragraph{Organization.}
Section~\ref{sec:data} describes the data and contract mapping.
Section~\ref{sec:model} develops the Black--Scholes benchmark, the delta-method
uncertainty propagation, and the hypothesis testing framework.
Section~\ref{sec:results} presents the main results, pooled evidence,
cross-sectional patterns, and persistence analysis.
Section~\ref{sec:backtest} describes the arbitrage-style proxy backtest.
Section~\ref{sec:extension} reports the Deribit extension and Ethereum exercise.
Section~\ref{sec:discussion} interprets the findings and discusses scope conditions.
Section~\ref{sec:conclusion} concludes. Proofs are in the Appendix.

\section{Data}
\label{sec:data}

\subsection{Contract mapping}
\label{sec:mapping}

The key empirical step is matching a prediction-market contract to an options-based
payoff with the same economic meaning. A Polymarket Yes share pays one unit at
maturity $T$ if the terminal Bitcoin level $S_T$ exceeds a threshold $K$.
Abstracting from minor venue-specific settlement details, that payoff is the binary
indicator $\mathbf{1}\{S_T > K\}$. Under a frictionless no-arbitrage benchmark
with constant short rate $r$, its time-$t$ price is
\begin{equation}
e^{-r\tau_t}\mathbb{E}^{\mathbb{Q}}\!\left[\mathbf{1}\{S_T > K\}\mid\mathcal{F}_t\right],
\qquad \tau_t = \frac{T-t}{365}.
\end{equation}
Under Black--Scholes dynamics, this expectation equals the value of a cash-or-nothing
call option, providing a direct bridge from a listed vanilla option price to a
prediction-market benchmark.

\subsection{Sources and sample}
\label{sec:sources}

The analysis combines three public data sources.

\begin{enumerate}
\item \emph{Binance option data.} Observations are taken from the public
\texttt{EOHSummary} archive, which provides historical bid and ask information for
all listed Bitcoin option contracts at end-of-hour frequency \citep{binancepublicdata}.

\item \emph{Binance spot data.} Spot prices are taken from the public hourly bar
archive at the same source \citep{binancepublicdata}.

\item \emph{Polymarket data.} Prediction-market prices are constructed from public
transaction-level trade data and, where available, from historical hourly prices
\citep{polymarketdocs}.
\end{enumerate}

The main specification studies the Polymarket contract ``BTC above \$27{,}000 at
end of September?'' (market id~252196), matched to Binance option
\texttt{BTC-230929-27000-C}. For pooled robustness, I also examine two August~2023
contracts matched to \texttt{BTC-230901-28000-C} and \texttt{BTC-230901-26000-C}.
The pooled exact-data panel contains 287 aligned hourly observations.

\subsection{Variables}

\begin{table}[H]
\centering
\caption{Variable definitions}
\label{tab:vars}
\begin{tabular}{ll}
\toprule
Variable & Definition \\
\midrule
$C_{mkt,t}$    & Binance option mid-price (average of best bid and ask) \\
$S_t$          & Binance spot reference level \\
$K$            & Strike shared by the two contracts \\
$\tau_t$       & Time to Polymarket expiration (years, 365-day convention) \\
$\tau_t^{(c)}$ & Time to Binance option expiration \\
$P_{poly,t}$   & Polymarket Yes price on the unit interval \\
$P_{fair,t}$   & Option-implied risk-neutral binary benchmark \\
$s_{B,t}$      & Relative Binance bid--ask spread \\
$s_P$          & Polymarket spread parameter \\
$f_B,\, f_P$   & Binance and Polymarket fee parameters \\
$D_t$          & Pricing gap $P_{poly,t}-P_{fair,t}$ \\
\bottomrule
\end{tabular}
\end{table}

All Polymarket observations are placed on the Yes-probability scale. A Yes trade
at price $p$ enters as $p$; a No trade at price $p$ enters as $1-p$.

\subsection{Time alignment}

All series are aligned at hourly frequency. Binance spot and option observations are
recorded on the hour. Polymarket observations are merged backward to the latest
available quote or trade at or before the Binance timestamp. Observations with
extremely short maturity are excluded to avoid numerically unstable implied-volatility
estimates.

\section{Model and Inference}
\label{sec:model}

\subsection{Black--Scholes benchmark}
\label{sec:bs}

Under the Black--Scholes model \citep{black1973}, the European call price is
\begin{equation}
C_{BS}(S_t,K,r,\tau,\sigma) = S_t\Phi(d_1)-Ke^{-r\tau}\Phi(d_2),
\end{equation}
with
\begin{equation}
d_1 = \frac{\log(S_t/K)+(r+\sigma^2/2)\tau}{\sigma\sqrt{\tau}}, \qquad
d_2 = d_1-\sigma\sqrt{\tau}.
\end{equation}
The discounted cash-or-nothing call value,
\begin{equation}
\label{eq:pfair}
P_{fair,t}(\sigma) = e^{-r\tau_t}\Phi\!\left(d_{2,t}(\sigma)\right),
\end{equation}
serves as the option-implied risk-neutral benchmark for the prediction-market event
(Proposition~\ref{prop:digital-price}). Black--Scholes is the natural starting point
because any deviation from it is quantifiable and replicable. Stochastic-volatility
corrections would in general widen the measured gap, since Bitcoin smiles are
left-skewed for near-money strikes; the benchmark is therefore conservative with
respect to the sign of the main finding.

\subsection{Implied-volatility inversion}

At each time $t$, implied volatility $\hat\sigma_t$ is found numerically as the
root of $F_t(\sigma) = C_{BS}(S_t,K,r,\tau_t^{(c)},\sigma) - C_{mkt,t}$, then
substituted into \eqref{eq:pfair} to obtain $P_{fair,t}$.

\subsection{Delta-method uncertainty propagation}
\label{sec:delta}

Treating the observed option mid as a latent efficient price plus additive noise,
$C_{mkt,t} = C^\star_t + \varepsilon_t$, a first-order implicit-function expansion
with call vega $\mathcal{V}_t = \partial C_{BS}/\partial\sigma$ gives
\begin{equation}
\operatorname{Var}(P_{fair,t}\mid\mathcal{F}_t)
\approx \left(g'(\hat\sigma_t)\right)^2 \frac{\varsigma_{C,t}^2}{\mathcal{V}_t^2},
\end{equation}
where $g(\sigma)=e^{-r\tau_t}\Phi(d_{2,t}(\sigma))$ and
$g'(\sigma) = -e^{-r\tau_t}\phi(d_2)d_1/\sigma$
(Propositions~\ref{prop:digital-sensitivity}--\ref{prop:iv-delta}). The plug-in
standard error is
\begin{equation}
SE(P_{fair,t}) = \left|g'(\hat\sigma_t)\right|\frac{\hat\varsigma_{C,t}}{\mathcal{V}_t},
\end{equation}
where $\hat\varsigma_{C,t} = \max\!\{(C_{high,t}-C_{low,t})/2,\;0.01|C_{mkt,t}|,
\;10^{-10}\}$ uses the intra-hour high--low range as a conservative scale for
within-hour quote variation.

\subsection{Friction-adjusted comparison band}

The total friction term is
\begin{equation}
TF_t = (f_B+f_P) + \tfrac{1}{2}(s_{B,t}+s_P),
\end{equation}
and the friction-adjusted uncertainty band is
\begin{equation}
CI_{adj,t} = \bigl[P_{fair,t} - 1.96\,SE(P_{fair,t}) - TF_t,\;
             P_{fair,t} + 1.96\,SE(P_{fair,t}) + TF_t\bigr].
\end{equation}

\subsection{Discrepancy process and hypothesis testing}
\label{sec:testing}

The central object is $D_t = P_{poly,t} - P_{fair,t}$ and the null hypothesis is
$H_0:\mathbb{E}[D_t]=0$. The baseline test is the one-sample $t$-statistic
$t_n = \bar{D}_n/(s_D/\sqrt{n})$, supplemented by HAC confidence intervals using
the Newey--West estimator \citep{newey1987} and circular block-bootstrap intervals
with blocks of length $\lfloor n^{1/3}\rfloor$.

\subsection{Dynamic specification}

Persistence is characterized via the AR(1) process
\begin{equation}
D_t = a + \phi D_{t-1} + u_t,
\end{equation}
with implied half-life $h_{1/2} = \log(0.5)/\log(\phi)$ hours and stationarity
assessed via the Augmented Dickey--Fuller test \citep{dickey1979}.

\section{Empirical Results}
\label{sec:results}

\subsection{Main market}

The main specification matches the Polymarket contract ``BTC above \$27{,}000 at
end of September?'' to Binance option \texttt{BTC-230929-27000-C} over 214 aligned
hourly observations. Sample averages are
\[
\overline{C_{mkt}} = 253.45, \quad
\overline{S_t} = 26{,}664.80, \quad
\overline{P_{poly}} = 0.4043, \quad
\overline{P_{fair}} = 0.3484.
\]
The mean discrepancy is $\bar{D} = 0.0558$ ($s_D = 0.1264$), indicating that
Polymarket Yes prices exceeded the Binance-implied risk-neutral benchmark by roughly
5.6 percentage points on average. Table~\ref{tab:main} collects the inferential
results.

\begin{table}[H]
\centering
\caption{Main-market results: BTC above \$27{,}000 at end of September}
\label{tab:main}
\begin{threeparttable}
\begin{tabular}{lc}
\toprule
Statistic & Value \\
\midrule
Aligned observations $n$ & 214 \\
Mean discrepancy $\bar{D}$ & $0.0558$*** \\
Standard deviation $s_D$ & $0.1264$ \\
One-sample $t$-statistic & $6.46$ \\
One-sample $p$-value & $6.86\times10^{-10}$ \\
HAC 95\% CI for $\mathbb{E}[D_t]$ & $[0.0228,\;0.0889]$ \\
Block-bootstrap 95\% CI & $[0.0212,\;0.0911]$ \\
Mean friction term $\overline{TF_t}$ & $0.0427$ \\
Mean $\overline{SE(P_{fair,t})}$ & $0.0458$ \\
Share outside $CI_{adj,t}$ & $0.411$ \\
AR(1) $\hat\phi$ & $0.849\;\pm\;0.051$ \\
Half-life (hours) & $4.2$ \\
ADF statistic & $-3.69$ ($p=0.004$) \\
\bottomrule
\end{tabular}
\begin{tablenotes}[flushleft]
\footnotesize
\item \textit{Notes.} Two-sided one-sample test of $H_0:\mathbb{E}[D_t]=0$.
* $p<0.10$, ** $p<0.05$, *** $p<0.01$. HAC: Newey--West. Bootstrap: circular
blocks of length $\lfloor n^{1/3}\rfloor$.
\end{tablenotes}
\end{threeparttable}
\end{table}

The inference is unambiguous across all three procedures. The AR(1) half-life of
4.2 hours combined with ADF rejection of a unit root ($p=0.004$) characterizes the
wedge as stationary but highly persistent---consistent with slow cross-venue
information transmission rather than transitory noise. Figure~\ref{fig:main_ts}
shows the full time series.

\begin{figure}[H]
\centering
\includegraphics[width=0.88\textwidth]{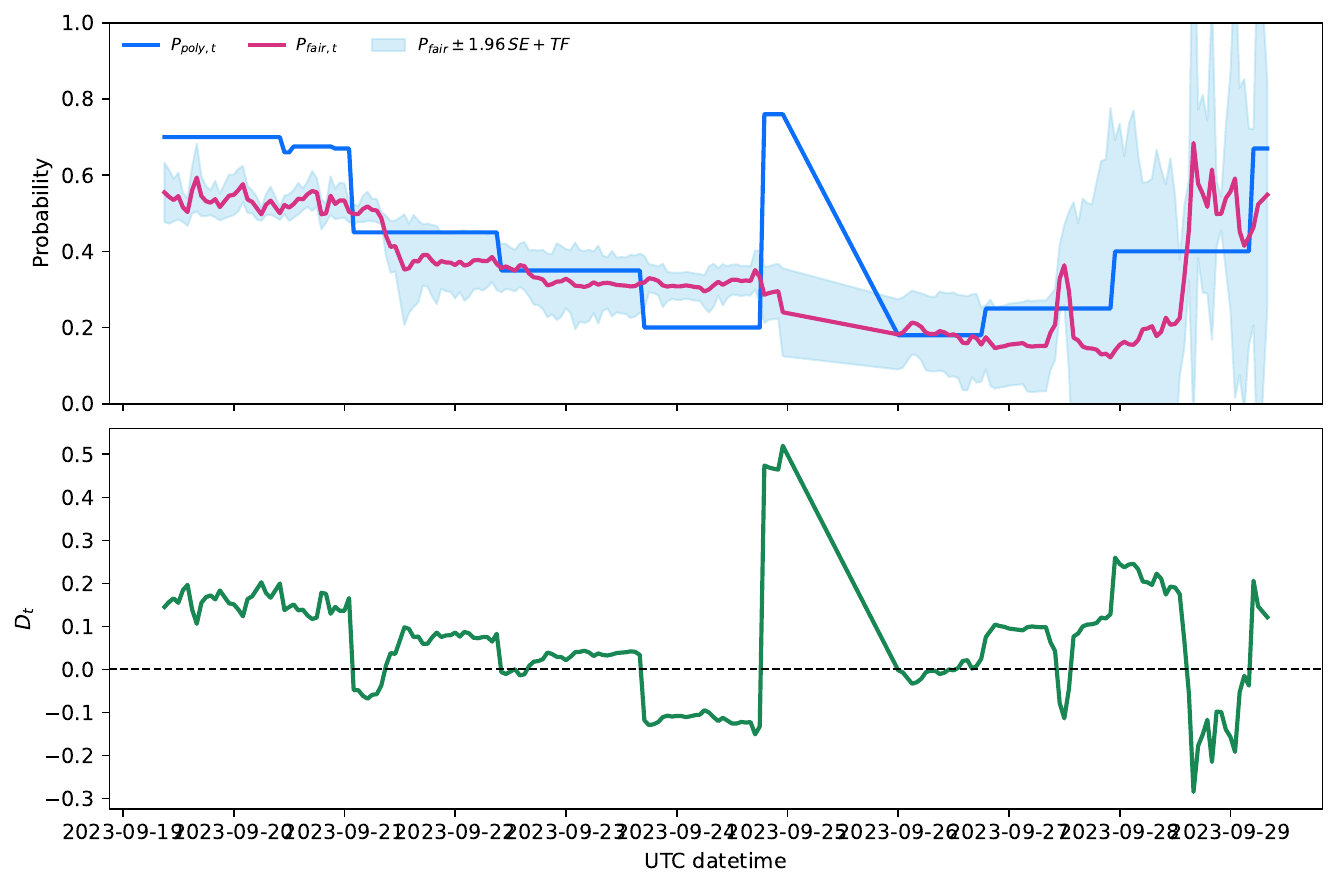}
\caption{Main-market time series. Upper panel: Polymarket Yes price (blue),
Binance-implied benchmark (red), and friction-adjusted band (shaded). Lower panel:
discrepancy process $D_t=P_{poly,t}-P_{fair,t}$.}
\label{fig:main_ts}
\end{figure}

\subsection{Pooled evidence}

Table~\ref{tab:markets} reports results across all three contracts and the pooled
panel.

\begin{table}[H]
\centering
\caption{Market-level and pooled results}
\label{tab:markets}
\begin{threeparttable}
\begin{tabular}{lcccc}
\toprule
Market & $n$ & Mean $D$ & 95\% HAC CI & Share outside $CI_{adj}$ \\
\midrule
BTC $>$ \$28{,}000 end of August     & 37  & $0.0957$*** & $[0.070,\;0.122]$ & 0.405 \\
BTC $>$ \$26{,}000 end of August     & 36  & $0.0745$**  & $[-0.027,\;0.176]$ & 0.028 \\
BTC $>$ \$27{,}000 end of September  & 214 & $0.0558$*** & $[0.023,\;0.089]$  & 0.411 \\
\midrule
Pooled panel                         & 287 & $0.0633$*** & $[0.033,\;0.094]$  & 0.362 \\
\bottomrule
\end{tabular}
\begin{tablenotes}[flushleft]
\footnotesize
\item \textit{Notes.} * $p<0.10$, ** $p<0.05$, *** $p<0.01$.
Pooled one-sample $t=8.24$, $p=6.26\times10^{-15}$.
\end{tablenotes}
\end{threeparttable}
\end{table}

The positive wedge is present in every market. Figure~\ref{fig:pooled_panels}
shows the pooled discrepancy distribution and market-level mean estimates.

\begin{figure}[H]
\centering
\begin{subfigure}[t]{0.47\textwidth}
\centering
\includegraphics[width=\textwidth]{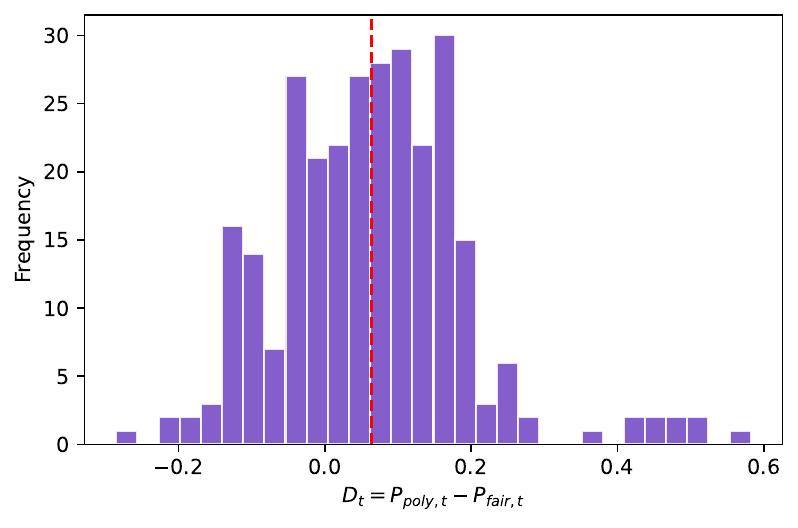}
\caption{Pooled discrepancy distribution.}
\label{fig:pooled_hist}
\end{subfigure}
\hfill
\begin{subfigure}[t]{0.49\textwidth}
\centering
\includegraphics[width=\textwidth]{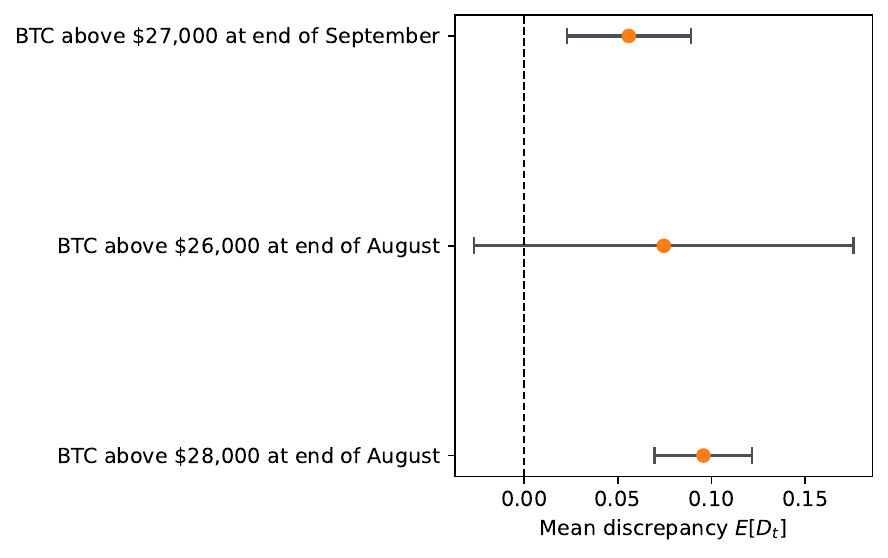}
\caption{Market-level mean estimates with HAC CIs.}
\label{fig:market_means}
\end{subfigure}
\caption{Pooled evidence across the three Bitcoin threshold markets.}
\label{fig:pooled_panels}
\end{figure}

\subsection{Robustness and cross-sectional patterns}
\label{sec:robustness}

The pooled result is not driven by outliers: the median discrepancy is 0.0642, the
10\% trimmed mean is 0.0576, and a sign test strongly rejects a 50--50 split
(68.6\% positive observations, $p=2.45\times10^{-10}$). Table~\ref{tab:robust}
collects these outlier-robust summaries.

\begin{table}[H]
\centering
\caption{Outlier-robust summary statistics}
\label{tab:robust}
\begin{threeparttable}
\begin{tabular}{lcccc}
\toprule
Sample & $n$ & Median $D$ & Trimmed mean & Share$(D_t>0)$ \\
\midrule
BTC $>$ \$28{,}000 end of August    & 37  & $0.100$  & $0.097$ & $1.000$ \\
BTC $>$ \$26{,}000 end of August    & 36  & $-0.036$ & $0.049$ & $0.389$ \\
BTC $>$ \$27{,}000 end of September & 214 & $0.064$  & $0.053$ & $0.682$ \\
Pooled panel                         & 287 & $0.064$  & $0.058$ & $0.686$ \\
\bottomrule
\end{tabular}
\begin{tablenotes}[flushleft]
\footnotesize
\item Trimmed mean drops top and bottom 10\% within each sample.
\end{tablenotes}
\end{threeparttable}
\end{table}

The cross-sectional regression in Table~\ref{tab:reg} reveals the structure of the
wedge: it is largest at low option-implied probabilities, grows with benchmark
uncertainty, and increases with time to expiry. The pattern at low probabilities is
the signature of speculative demand for out-of-the-money prediction-market
contracts, analogous to the favourite--longshot bias in parimutuel markets
\citep{jullien2000}.

\begin{table}[H]
\centering
\caption{Cross-sectional determinants of the pricing gap}
\label{tab:reg}
\begin{threeparttable}
\begin{tabular}{lccc}
\toprule
Variable & Coefficient & Robust SE & $p$-value \\
\midrule
$P_{fair,t}$           & $-0.398$ & $0.063$ & $<0.001$ \\
$SE(P_{fair,t})$       & $\phantom{-}0.039$ & $0.019$ & $0.040$ \\
Time to expiry (hours) & $\phantom{-}0.0008$ & $0.0001$ & $<0.001$ \\
\bottomrule
\end{tabular}
\begin{tablenotes}[flushleft]
\footnotesize
\item Dependent variable: $D_t$. Pooled panel, market fixed effects, $n=287$,
$R^2=0.221$.
\end{tablenotes}
\end{threeparttable}
\end{table}

\begin{figure}[H]
\centering
\begin{subfigure}[t]{0.47\textwidth}
\centering
\includegraphics[width=\textwidth]{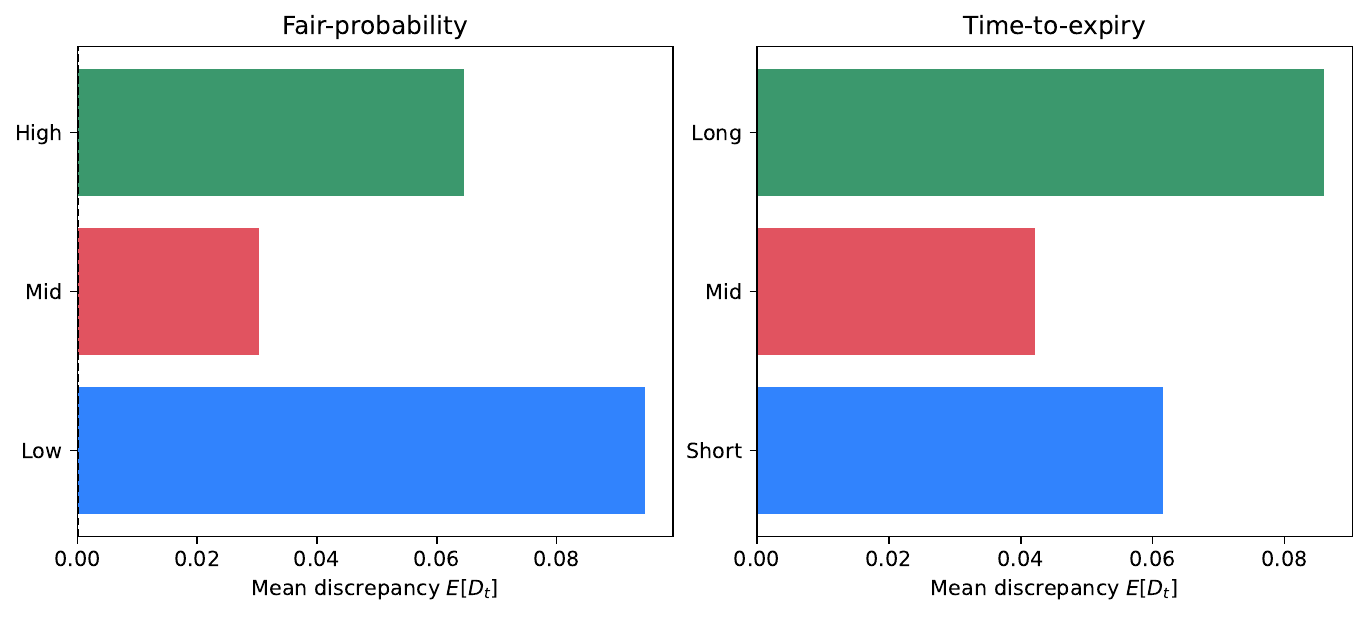}
\caption{Mean gap across terciles.}
\end{subfigure}
\hfill
\begin{subfigure}[t]{0.49\textwidth}
\centering
\includegraphics[width=\textwidth]{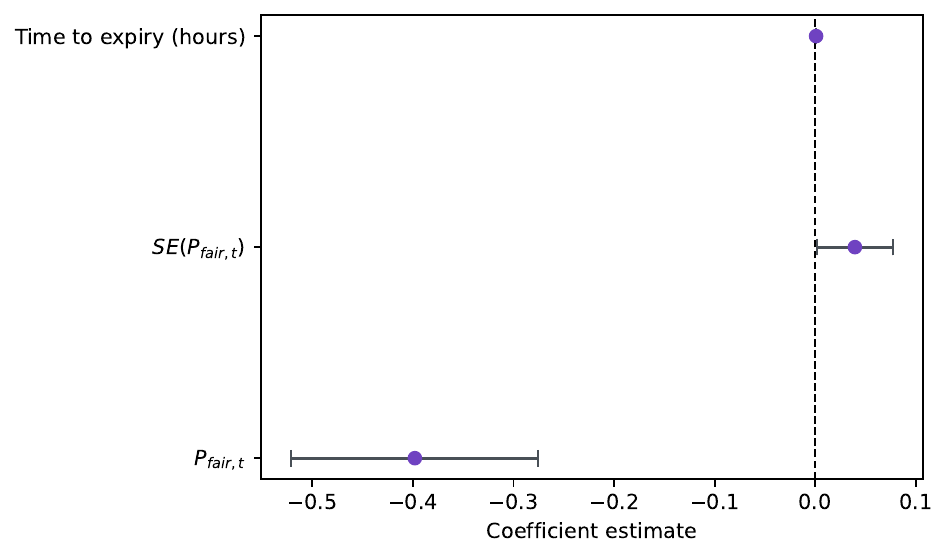}
\caption{Regression coefficients with 95\% CIs.}
\end{subfigure}
\caption{Cross-sectional patterns.}
\label{fig:patterns}
\end{figure}

\section{Arbitrage-Style Proxy Backtest}
\label{sec:backtest}

To test whether the wedge is economically exploitable, I construct a delta-hedged
proxy using the matched Binance call, spot, and a cash account. At entry time $t_0$,
the call quantity $q_0 = \mathcal{V}^D_{t_0}/\mathcal{V}^C_{t_0}$ matches the
digital's vega, and the spot position $x_{t_0} = \Delta^D_{t_0} - q_0\Delta^C_{t_0}$
matches its delta. The spot hedge is rebalanced hourly. Positions open when
$|D_t| > SE(P_{fair,t}) + TF_t$ and close on mean reversion or at expiry. Net PnL
subtracts Polymarket, option, and spot transaction costs. The pooled backtest
generates 16 trades, gross PnL 1.649, and net PnL 1.113 after all cost assumptions.
Net win rate is 69\% and median holding period is 3.5 hours. Pooled net alpha is
0.067 ($t=2.10$, $p=0.053$, HAC CI: $[-0.008,\;0.143]$). Table~\ref{tab:backtest}
gives market-level detail and Figure~\ref{fig:backtest_pnl} plots cumulative PnL.

\begin{table}[H]
\centering
\caption{Arbitrage-style proxy backtest summary}
\label{tab:backtest}
\begin{threeparttable}
\begin{tabular}{lcccccc}
\toprule
Market & Trades & Net PnL & Net $\alpha$ & $p$ & 95\% HAC CI & Med.\ hold (h) \\
\midrule
BTC $>$ \$28{,}000 Aug & 4  & $0.274$   & $0.221$ & $0.103$ & $[0.027,\;0.416]$ & 2.5 \\
BTC $>$ \$26{,}000 Aug & 1  & $-0.345$  & ---     & ---     & ---                & 1.0 \\
BTC $>$ \$27{,}000 Sep & 11 & $1.183$   & $0.018$ & $0.092$ & $[-0.003,\;0.039]$ & 6.0 \\
\midrule
Pooled                  & 16 & $1.113$   & $0.067$ & $0.053$ & $[-0.008,\;0.143]$ & 3.5 \\
\bottomrule
\end{tabular}
\begin{tablenotes}[flushleft]
\footnotesize
\item Alpha = mean net return relative to deployed notional.
\end{tablenotes}
\end{threeparttable}
\end{table}

\begin{figure}[H]
\centering
\includegraphics[width=0.72\textwidth]{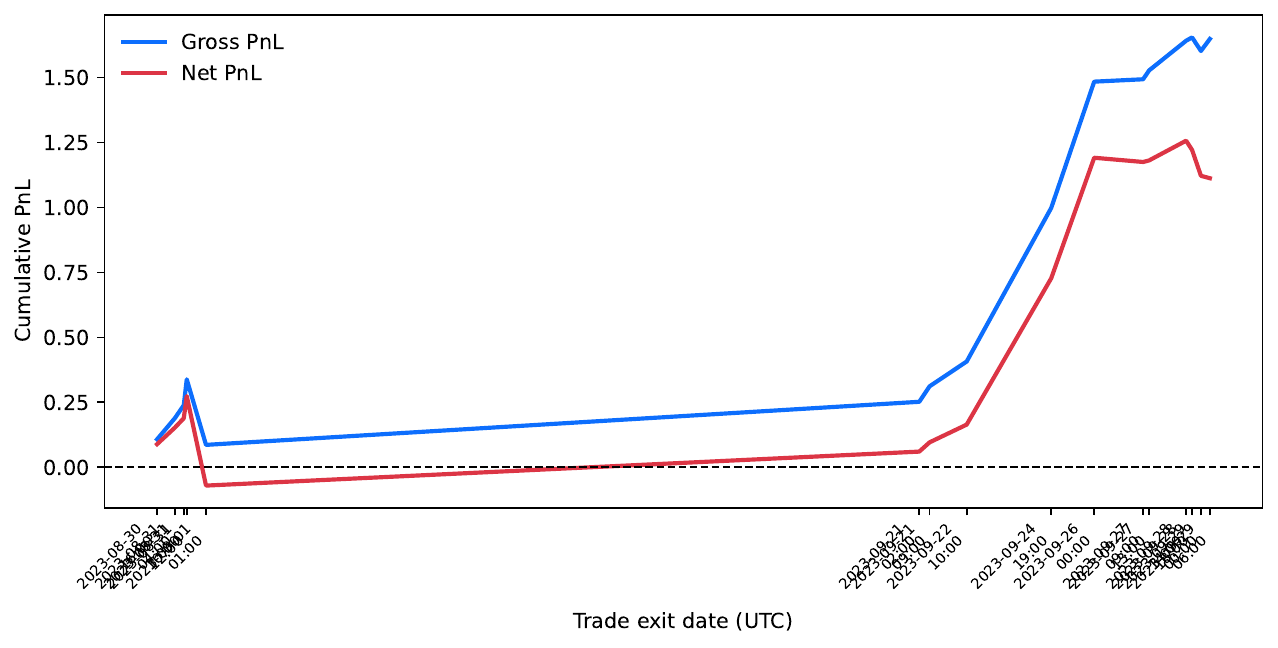}
\caption{Cumulative gross (blue) and net (red) PnL across trade exits.}
\label{fig:backtest_pnl}
\end{figure}

\section{Benchmark Extension: Deribit and Ethereum}
\label{sec:extension}

\subsection{Bitcoin against Deribit}

Re-running the same three Bitcoin threshold markets against Deribit call options
confirms the direction but reveals cross-market heterogeneity. In the
September~\$27{,}000 contract, the mean discrepancy is 0.1248 over 645 observations
(HAC CI: $[0.075,\;0.174]$). The pooled three-market Deribit panel gives a mean
of 0.1105 over 2{,}585 observations (HAC CI: $[0.074,\;0.147]$). Table~\ref{tab:deribit}
compares both benchmarks.

\begin{table}[H]
\centering
\caption{Binance vs.\ Deribit benchmark comparison}
\label{tab:deribit}
\begin{threeparttable}
\begin{tabular}{llcccc}
\toprule
Sample & Benchmark & $n$ & Mean $D$ & 95\% HAC CI & $p$-value \\
\midrule
Sep \$27{,}000 & Binance & 214   & 0.0558 & $[0.023,\;0.089]$ & $6.9\times10^{-10}$ \\
Sep \$27{,}000 & Deribit & 645   & 0.1248 & $[0.075,\;0.174]$ & $2.5\times10^{-32}$ \\
\midrule
Pooled & Binance & 287    & 0.0633 & $[0.033,\;0.094]$ & $6.3\times10^{-15}$ \\
Pooled & Deribit & 2{,}585 & 0.1105 & $[0.074,\;0.147]$ & $5.8\times10^{-64}$ \\
\bottomrule
\end{tabular}
\end{threeparttable}
\end{table}

\subsection{Ethereum exercise}

A smaller exercise matches four Polymarket ``ETH above \$1{,}700'' contracts from
February--March 2023 to Deribit ETH calls. The pooled four-market mean discrepancy
is 0.0129 over 2{,}737 observations (HAC CI: $[-0.000,\;0.026]$), with clear
instability across contracts: the three February markets show slightly negative
discrepancies, while the March~24 market is strongly positive at 0.1899
(HAC CI: $[0.145,\;0.234]$). The Ethereum exercise is supplementary evidence rather
than a second main result.

\section{Discussion}
\label{sec:discussion}

\subsection{Scope conditions}

The benchmark is model-based. Black--Scholes is a clean, replicable standard but
not a literal description of cryptocurrency dynamics. The result should be read as
a deviation from the standard option-implied risk-neutral benchmark rather than a
statement about physical probabilities. Stochastic-volatility corrections would in
general widen the gap, making the Black--Scholes benchmark conservative. The
contract mapping is close but not exact: Polymarket resolves under oracle procedures
while the option benchmark follows exchange convention. The sample is intentionally
narrow---three Binance-compatible contracts---which rules out mapping ambiguity but
limits external validity.

\subsection{Interpretation}

Three mechanisms are consistent with the evidence. First, \textit{demand-side
overpricing}: the strong negative relationship between the wedge and the
option-implied probability mirrors the favourite--longshot bias in parimutuel markets
\citep{thaler1988, jullien2000}. Retail participants on digital prediction platforms
over-weight unlikely outcomes, bidding prices above the risk-neutral consensus.
Second, \textit{segmented capital and slow arbitrage}: the four-hour half-life is
consistent with limits to arbitrage \citep{gromb2010, duffie2010}; capital that
could equalize prices is either absent or costly to deploy across two different
platform ecosystems. Third, \textit{risk-neutral measure differences}: if the two
venues embed systematically different risk adjustments, a persistent gap could
survive in frictionless equilibrium---though this requires a structural model of
preference heterogeneity not developed here.

\section{Conclusion}
\label{sec:conclusion}

This paper provides the first option-implied benchmark test of prediction-market
pricing for cryptocurrency threshold contracts. In the main September~2023 Bitcoin
contract, the mean discrepancy is 5.6 percentage points across 214 hourly
observations ($p<10^{-9}$). In the pooled three-market panel, it is 6.3 percentage
points across 287 observations ($p<10^{-14}$). The gap is largest at low
option-implied probabilities and long maturities, robust to outlier removal and
sign tests, and persistent with a four-hour half-life. A delta-hedged proxy backtest
generates positive net PnL after conservative costs. The Deribit extension
preserves the sign and economic magnitude of the finding; the Ethereum exercise is
more mixed.

The central takeaway is that digital fragmentation of financial markets generates
systematic, persistent pricing wedges even for economically identical payoffs.
Two digitally native platforms---a centralized option exchange and a blockchain
prediction market---do not agree on the value of the same binary event, and the
structure of the disagreement points toward demand-side overpricing on the
prediction-market side. This has implications for how prediction-market prices
should be interpreted as probability estimates, and for how researchers understand
the relationship between digital prediction platforms and centralized derivatives
markets.

\appendix
\section{Analytical Derivations}
\label{sec:appendix}

Throughout this appendix, $\phi(\cdot)$ and $\Phi(\cdot)$ denote the standard
normal density and cumulative distribution function, respectively. We write
$\tau = T - t > 0$ for time to maturity.

\begin{proposition}[Risk-neutral valuation of the binary payoff]
\label{prop:digital-price}
Let $S$ follow geometric Brownian motion under the risk-neutral measure
$\mathbb{Q}$,
\[
\frac{dS_u}{S_u} = r\,du + \sigma\,dW_u^{\mathbb{Q}}, \qquad u \in [t,T],
\]
with constant drift $r\geq 0$ and volatility $\sigma > 0$. Then the time-$t$
no-arbitrage price of the unit payoff $\mathbf{1}\{S_T > K\}$ is
\[
P_{fair,t} = e^{-r\tau}\,\Phi(d_2),
\]
where
\[
d_2 = \frac{\log(S_t/K) + (r - \tfrac{1}{2}\sigma^2)\tau}{\sigma\sqrt{\tau}}.
\]
\end{proposition}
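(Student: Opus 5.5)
The plan is the standard risk-neutral valuation argument. In three steps: justify the pricing formula as a discounted $\mathbb{Q}$-expectation, solve the SDE explicitly, and reduce the indicator event to a tail event for a standard normal variable.

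\textbf{Step 1 (pricing rule).} With a constant short rate $r$, the money-market account is $B_u=e^{ru}$. Under $\mathbb{Q}$, discounted traded prices are martingales, which is the content of the first fundamental theorem of asset pricing in this complete market. The claim $\mathbf{1}\{S_T>K\}$ is bounded, hence integrable. Since the Black--Scholes market is complete, the claim is replicable by a self-financing strategy in $S$ and $B$. Its unique no-arbitrage price is therefore
\[
P_{fair,t}=e^{-r\tau}\,\mathbb{E}^{\mathbb{Q}}\!\left[\mathbf{1}\{S_T>K\}\mid\mathcal{F}_t\right]
=e^{-r\tau}\,\mathbb{Q}(S_T>K\mid\mathcal{F}_t),
\]
which agrees with the expression in Section~\ref{sec:mapping}. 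I will cite completeness and the martingale-pricing theorem as standard results rather than reprove them.

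\textbf{Step 2 (solving the SDE).} Apply It\^o's formula to $\log S_u$, which gives $d\log S_u=(r-\tfrac12\sigma^2)\,du+\sigma\,dW^{\mathbb{Q}}_u$. Integrating from $t$ to $T$ yields
\[
S_T=S_t\exp\!\big((r-\tfrac12\sigma^2)\tau+\sigma(W^{\mathbb{Q}}_T-W^{\mathbb{Q}}_t)\big).
\]
The increment $W^{\mathbb{Q}}_T-W^{\mathbb{Q}}_t$ is independent of $\mathcal{F}_t$ and distributed $N(0,\tau)$. Write it as $\sqrt{\tau}\,Z$ with $Z\sim N(0,1)$ independent of $\mathcal{F}_t$.

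\textbf{Step 3 (computing the probability).} Because $\log$ is strictly increasing and $\sigma\sqrt{\tau}>0$, the event $\{S_T>K\}$ equals
\[
\{Z>-d_2\}, \qquad d_2=\frac{\log(S_t/K)+(r-\tfrac12\sigma^2)\tau}{\sigma\sqrt{\tau}}.
\]
Here $S_t$ is $\mathcal{F}_t$-measurable and $Z$ is independent of $\mathcal{F}_t$. By the freezing (independence) lemma for conditional expectations, the conditional probability equals $1-\Phi(-d_2)$. By symmetry of the normal law this is $\Phi(d_2)$, so $P_{fair,t}=e^{-r\tau}\Phi(d_2)$.

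The only point needing care is the justification in Step~1 that the discounted expectation is \emph{the} no-arbitrage price rather than merely a consistent one. For this I would note that the claim is bounded and rely on the martingale representation theorem for the Brownian filtration, which gives a replicating strategy. The remaining computations are routine.

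One consistency check: the result matches the second term of $C_{BS}$ divided by $K$. Equivalently, it equals $-\partial C_{BS}/\partial K$, which gives an alternative derivation via the Breeden--Litzenberger relation.
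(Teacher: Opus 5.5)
Your proposal is correct and follows the same route as the paper's proof. Both apply It\^o's formula to $\log S_u$ to get a log-normal $S_T$, rewrite $\{S_T>K\}$ as $\{Z>-d_2\}$ so that the conditional $\mathbb{Q}$-probability is $\Phi(d_2)$, and discount by $e^{-r\tau}$. The only differences are the ordering and extra rigor on your side: you justify that this is \emph{the} price via completeness and martingale representation, and you invoke the independence lemma for the conditioning, whereas the paper takes both for granted.
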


\begin{proof}
\textit{Step 1: Log-normal distribution of $S_T$.}
Applying It\^{o}'s lemma to $\log S_u$ and integrating from $t$ to $T$ gives
\[
\log S_T = \log S_t + \left(r - \tfrac{1}{2}\sigma^2\right)\tau
           + \sigma\int_t^T dW_u^{\mathbb{Q}}.
\]
Since $\int_t^T dW_u^{\mathbb{Q}} \sim N(0,\tau)$ under $\mathbb{Q}$, we can write
\[
\log S_T = \log S_t + \left(r - \tfrac{1}{2}\sigma^2\right)\tau
           + \sigma\sqrt{\tau}\,Z, \qquad Z \sim N(0,1) \text{ under } \mathbb{Q}.
\]

\textit{Step 2: Risk-neutral event probability.}
The event $\{S_T > K\}$ is equivalent to
\[
\left\{Z > \frac{\log(K/S_t) - (r-\tfrac{1}{2}\sigma^2)\tau}{\sigma\sqrt{\tau}}\right\}
= \{Z > -d_2\}.
\]
Since $Z \sim N(0,1)$ under $\mathbb{Q}$ and $N(0,1)$ is symmetric,
\[
\mathbb{Q}(S_T > K \mid \mathcal{F}_t)
= \mathbb{Q}(Z > -d_2)
= 1 - \Phi(-d_2)
= \Phi(d_2).
\]

\textit{Step 3: No-arbitrage pricing.}
Under the risk-neutral measure, the time-$t$ price of any payoff equals its
discounted expectation under $\mathbb{Q}$. Therefore
\[
P_{fair,t}
= e^{-r\tau}\,\mathbb{E}^{\mathbb{Q}}\!\bigl[\mathbf{1}\{S_T > K\}\mid\mathcal{F}_t\bigr]
= e^{-r\tau}\,\mathbb{Q}(S_T > K \mid \mathcal{F}_t)
= e^{-r\tau}\,\Phi(d_2). \qedhere
\]
\end{proof}

\begin{proposition}[Black--Scholes vega]
\label{prop:vega}
Let $C_{BS}(S_t,K,r,\tau,\sigma) = S_t\Phi(d_1) - Ke^{-r\tau}\Phi(d_2)$ denote
the Black--Scholes call price, with $d_1 = [\log(S_t/K)+(r+\tfrac{1}{2}\sigma^2)\tau]
/(\sigma\sqrt{\tau})$ and $d_2 = d_1 - \sigma\sqrt{\tau}$. Then
\[
\mathcal{V}(S_t,K,r,\tau,\sigma)
\;:=\;
\frac{\partial C_{BS}}{\partial\sigma}
= S_t\,\phi(d_1)\,\sqrt{\tau}.
\]
\end{proposition}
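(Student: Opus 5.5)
The plan is to differentiate $C_{BS}$ directly in $\sigma$ by the chain rule, then cancel the two density terms using the standard identity
\[
S_t\,\phi(d_1) = K e^{-r\tau}\,\phi(d_2),
\]
which I will prove first.

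\emph{The identity.} Since $d_2 = d_1 - \sigma\sqrt{\tau}$,
\[
\phi(d_2) = \phi(d_1)\exp\!\left(d_1\sigma\sqrt{\tau} - \tfrac{1}{2}\sigma^2\tau\right).
\]
From the definition of $d_1$, the exponent is $d_1\sigma\sqrt{\tau} - \tfrac12\sigma^2\tau = \log(S_t/K) + r\tau$. Hence $\phi(d_2) = \phi(d_1)\,(S_t/K)\,e^{r\tau}$. Multiplying by $Ke^{-r\tau}$ gives the identity. This short computation is the only place where care is needed; an error in the exponent would break the cancellation.

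\emph{The derivative.} I treat $S_t, K, r, \tau$ as fixed and differentiate:
\[
\frac{\partial C_{BS}}{\partial\sigma} = S_t\,\phi(d_1)\,\frac{\partial d_1}{\partial\sigma} - Ke^{-r\tau}\,\phi(d_2)\,\frac{\partial d_2}{\partial\sigma}.
\]
Because $d_2 = d_1 - \sigma\sqrt{\tau}$, we have
\[
\frac{\partial d_2}{\partial\sigma} = \frac{\partial d_1}{\partial\sigma} - \sqrt{\tau}.
\]
Substituting this and using the identity to replace $Ke^{-r\tau}\phi(d_2)$ by $S_t\phi(d_1)$ gives
\[
\frac{\partial C_{BS}}{\partial\sigma} = S_t\phi(d_1)\left[\frac{\partial d_1}{\partial\sigma} - \frac{\partial d_1}{\partial\sigma} + \sqrt{\tau}\right] = S_t\,\phi(d_1)\,\sqrt{\tau}.
\]
Both terms in $\partial d_1/\partial\sigma$ cancel, so its explicit form is never needed.
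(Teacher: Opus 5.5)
Your proof is correct and matches the paper's argument: you establish $S_t\phi(d_1) = Ke^{-r\tau}\phi(d_2)$, differentiate by the chain rule, and cancel the density terms using $\partial d_1/\partial\sigma - \partial d_2/\partial\sigma = \sqrt{\tau}$. The only cosmetic difference is that you derive the identity by expanding $d_2^2 = (d_1-\sigma\sqrt{\tau})^2$, whereas the paper factors $d_1^2 - d_2^2 = (d_1+d_2)(d_1-d_2)$.
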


\begin{proof}
\textit{Step 1: Auxiliary identity.}
We first establish that $S_t\phi(d_1) = Ke^{-r\tau}\phi(d_2)$. Compute
\[
d_1^2 - d_2^2 = (d_1+d_2)(d_1-d_2).
\]
Since $d_1 - d_2 = \sigma\sqrt{\tau}$ and
$d_1 + d_2 = 2[\log(S_t/K) + r\tau]/(\sigma\sqrt{\tau})$, we obtain
\[
d_1^2 - d_2^2
= \frac{2[\log(S_t/K)+r\tau]}{\sigma\sqrt{\tau}}\cdot\sigma\sqrt{\tau}
= 2\bigl[\log(S_t/K)+r\tau\bigr].
\]
Therefore
\[
\frac{\phi(d_1)}{\phi(d_2)}
= \exp\!\left(-\frac{d_1^2-d_2^2}{2}\right)
= \exp\!\bigl(-\log(S_t/K)-r\tau\bigr)
= \frac{K}{S_t}\,e^{-r\tau},
\]
which gives $S_t\phi(d_1) = Ke^{-r\tau}\phi(d_2)$.

\textit{Step 2: Differentiation.}
Differentiating $C_{BS}$ with respect to $\sigma$:
\[
\frac{\partial C_{BS}}{\partial\sigma}
= S_t\,\phi(d_1)\,\frac{\partial d_1}{\partial\sigma}
- Ke^{-r\tau}\,\phi(d_2)\,\frac{\partial d_2}{\partial\sigma}.
\]
Applying the identity from Step 1, both terms share the factor $S_t\phi(d_1)$:
\[
\frac{\partial C_{BS}}{\partial\sigma}
= S_t\,\phi(d_1)\left(\frac{\partial d_1}{\partial\sigma}
  - \frac{\partial d_2}{\partial\sigma}\right).
\]
Since $d_1 - d_2 = \sigma\sqrt{\tau}$, differentiating with respect to $\sigma$
gives $\partial d_1/\partial\sigma - \partial d_2/\partial\sigma = \sqrt{\tau}$.
Substituting yields $\partial C_{BS}/\partial\sigma = S_t\phi(d_1)\sqrt{\tau}$.
\end{proof}

\begin{proposition}[Volatility sensitivity of the binary price]
\label{prop:digital-sensitivity}
Let $g(\sigma) = e^{-r\tau}\Phi(d_2(\sigma))$, where $d_2(\sigma)$ is defined
in Proposition~\ref{prop:digital-price}. Then
\[
g'(\sigma) = -\,e^{-r\tau}\,\phi(d_2)\,\frac{d_1}{\sigma}.
\]
\end{proposition}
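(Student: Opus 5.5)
The plan is a direct chain-rule computation. The discount factor $e^{-r\tau}$ does not depend on $\sigma$, so
\[
g'(\sigma) = e^{-r\tau}\,\phi(d_2)\,\frac{\partial d_2}{\partial\sigma},
\]
using $\Phi' = \phi$. The whole task therefore reduces to computing $\partial d_2/\partial\sigma$ and showing that it equals $-d_1/\sigma$.

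To compute it, I will write $d_2$ from Proposition~\ref{prop:digital-price} in separated form. Setting $L = \log(S_t/K) + r\tau$, which is independent of $\sigma$, we have
\[
d_2(\sigma) = \frac{L}{\sigma\sqrt{\tau}} - \tfrac12\sigma\sqrt{\tau}.
\]
Differentiating term by term gives
\[
\frac{\partial d_2}{\partial\sigma} = -\frac{L}{\sigma^2\sqrt{\tau}} - \tfrac12\sqrt{\tau}.
\]
Next I will use the same decomposition for $d_1$, namely $d_1 = L/(\sigma\sqrt{\tau}) + \tfrac12\sigma\sqrt{\tau}$. Dividing by $\sigma$ gives
\[
\frac{d_1}{\sigma} = \frac{L}{\sigma^2\sqrt{\tau}} + \tfrac12\sqrt{\tau},
\]
so $\partial d_2/\partial\sigma = -d_1/\sigma$ holds exactly.

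As a cross-check I can also use the relation from the proof of Proposition~\ref{prop:vega}. There, $d_1 - d_2 = \sigma\sqrt{\tau}$ gives $\partial d_2/\partial\sigma = \partial d_1/\partial\sigma - \sqrt{\tau}$. Moreover, $\partial d_1/\partial\sigma = -L/(\sigma^2\sqrt\tau) + \tfrac12\sqrt\tau = -d_2/\sigma$. Combining these yields $-d_2/\sigma - \sqrt{\tau} = -(d_2+\sigma\sqrt\tau)/\sigma = -d_1/\sigma$, which is consistent with the direct computation.

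Substituting into the chain rule then gives $g'(\sigma) = -e^{-r\tau}\phi(d_2)\,d_1/\sigma$, which is the claim.

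There is no genuine obstacle here. The only step needing care is the algebraic identification of $\partial d_2/\partial\sigma$ with $-d_1/\sigma$, which depends on keeping the sign of the $\pm\tfrac12\sigma\sqrt\tau$ terms straight. I will handle that by writing both $d_1$ and $d_2$ in the separated form $L/(\sigma\sqrt\tau) \pm \tfrac12\sigma\sqrt\tau$ before differentiating.
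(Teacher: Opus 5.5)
Your proposal is correct and follows the paper's proof essentially step for step. Both apply the chain rule, write $d_2 = A/(\sigma\sqrt{\tau}) - \tfrac{1}{2}\sigma\sqrt{\tau}$ with the $\sigma$-free constant $A = \log(S_t/K) + r\tau$, differentiate, and match the result term by term against $d_1/\sigma$; your cross-check via $\partial d_1/\partial\sigma = -d_2/\sigma$ is a valid extra consistency check that the paper does not include.
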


\begin{proof}
\textit{Step 1: Chain rule.}
By the chain rule,
\[
g'(\sigma) = e^{-r\tau}\,\phi(d_2(\sigma))\,\frac{\partial d_2}{\partial\sigma}.
\]

\textit{Step 2: Derivative of $d_2$ with respect to $\sigma$.}
Write $A := \log(S_t/K) + r\tau$ (a constant in $\sigma$). Then
\[
d_2(\sigma)
= \frac{A + \tfrac{1}{2}\sigma^2\tau}{\sigma\sqrt{\tau}} - \sigma\sqrt{\tau}
= \frac{A}{\sigma\sqrt{\tau}} + \frac{\sigma\sqrt{\tau}}{2} - \sigma\sqrt{\tau}
= \frac{A}{\sigma\sqrt{\tau}} - \frac{\sigma\sqrt{\tau}}{2}.
\]
Differentiating term by term:
\[
\frac{\partial d_2}{\partial\sigma}
= -\frac{A}{\sigma^2\sqrt{\tau}} - \frac{\sqrt{\tau}}{2}.
\]

\textit{Step 3: Relating the derivative to $d_1/\sigma$.}
From the definition of $d_1$,
\[
\frac{d_1}{\sigma}
= \frac{\log(S_t/K)+(r+\tfrac{1}{2}\sigma^2)\tau}{\sigma^2\sqrt{\tau}}
= \frac{A}{\sigma^2\sqrt{\tau}} + \frac{\sqrt{\tau}}{2}.
\]
Comparing with Step~2,
\[
\frac{\partial d_2}{\partial\sigma}
= -\left(\frac{A}{\sigma^2\sqrt{\tau}} + \frac{\sqrt{\tau}}{2}\right)
= -\frac{d_1}{\sigma}.
\]

\textit{Step 4: Conclusion.}
Substituting back into Step~1:
\[
g'(\sigma) = e^{-r\tau}\,\phi(d_2)\cdot\left(-\frac{d_1}{\sigma}\right)
           = -\,e^{-r\tau}\,\phi(d_2)\,\frac{d_1}{\sigma}. \qedhere
\]
\end{proof}

\begin{proposition}[Delta-method variance of the binary benchmark]
\label{prop:iv-delta}
Let $F_t(\sigma) := C_{BS}(S_t,K,r,\tau^{(c)}_t,\sigma) - C_{mkt,t}$, and
suppose $\hat\sigma_t$ is the unique root of $F_t$, with call vega
$\mathcal{V}_t := \partial C_{BS}/\partial\sigma \big|_{\sigma=\hat\sigma_t} > 0$.
Write $C_{mkt,t} = C^\star_t + \varepsilon_t$ where
$\mathbb{E}[\varepsilon_t\mid\mathcal{F}_t] = 0$ and
$\operatorname{Var}(\varepsilon_t\mid\mathcal{F}_t) = \varsigma_{C,t}^2 < \infty$.
If $P_{fair,t} = g(\hat\sigma_t)$ with $g$ as in
Proposition~\ref{prop:digital-sensitivity}, then to first order in $\varepsilon_t$,
\[
\operatorname{Var}(P_{fair,t}\mid\mathcal{F}_t)
\approx
\frac{\bigl(g'(\hat\sigma_t)\bigr)^2\,\varsigma_{C,t}^2}{\mathcal{V}_t^2}.
\]
\end{proposition}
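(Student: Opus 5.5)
The plan is a two-stage delta method: first propagate the noise $\varepsilon_t$ through the implicit-volatility map, then through $g$. The estimand is the map $c \mapsto \sigma(c)$ defined implicitly by $C_{BS}(S_t,K,r,\tau^{(c)}_t,\sigma(c)) = c$. Its value at the observed price is $\hat\sigma_t = \sigma(C_{mkt,t})$. Let $\sigma^\star_t := \sigma(C^\star_t)$ denote the latent efficient volatility.

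\emph{Step 1: differentiate the inverse map.} Proposition~\ref{prop:vega} gives $\partial C_{BS}/\partial\sigma = S_t\phi(d_1)\sqrt{\tau}>0$. So $C_{BS}$ is strictly increasing and continuously differentiable in $\sigma$, which also explains why the root is unique. By the implicit function theorem, $\sigma(\cdot)$ is $C^1$ near the observed price, with $\sigma'(c) = 1/\mathcal{V}(\sigma(c))$.

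\emph{Step 2: compose with $g$.} The composite map is $h(c) := g(\sigma(c))$. It is $C^1$, since $g$ is smooth in $\sigma>0$. By Proposition~\ref{prop:digital-sensitivity},
\[
h'(c) = \frac{g'(\sigma(c))}{\mathcal{V}(\sigma(c))}
= -\frac{e^{-r\tau}\phi(d_2)\,d_1/\sigma}{S_t\phi(d_1)\sqrt{\tau}}.
\]
A first-order Taylor expansion about $C^\star_t$ then gives
\[
P_{fair,t} = h(C^\star_t) + h'(C^\star_t)\,\varepsilon_t + o(\varepsilon_t).
\]

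\emph{Step 3: take conditional moments.} Both $C^\star_t$ and $h'(C^\star_t)$ are $\mathcal{F}_t$-measurable, and $\mathbb{E}[\varepsilon_t\mid\mathcal{F}_t]=0$. Dropping the remainder, the conditional variance is therefore $h'(C^\star_t)^2\varsigma_{C,t}^2$.

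\emph{Step 4: plug in.} At first order, evaluating $h'$ at $C^\star_t$ or at $C_{mkt,t}$ differs only by a term of order $\varepsilon_t$. That difference contributes only higher-order terms to the variance, so we may use $\hat\sigma_t$. This yields $(g'(\hat\sigma_t))^2\varsigma_{C,t}^2/\mathcal{V}_t^2$.

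The main obstacle is making ``to first order'' precise. The natural formalization is a small-noise regime, $\varsigma_{C,t}\to 0$. In that regime the remainder satisfies $o_p(\varsigma_{C,t})$ by continuity of $h'$ on a neighborhood where $\mathcal{V}$ stays bounded away from zero. That is guaranteed near $\sigma^\star_t>0$ with $\tau^{(c)}_t>0$. One subtlety should be flagged rather than resolved: $\tau$ in $g$ is the Polymarket maturity $\tau_t$, while $\mathcal{V}_t$ uses $\tau^{(c)}_t$. The chain rule is unaffected, since $\sigma$ is the only variable through which the noise enters. Beyond this, the argument is a routine application of the delta method, with Propositions~\ref{prop:vega} and~\ref{prop:digital-sensitivity} supplying the two derivatives.
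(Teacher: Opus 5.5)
Your proposal is correct and follows the paper's argument: the implicit function theorem gives $d\sigma/dC = 1/\mathcal{V}$, the delta method is then applied through $g$, and $\hat\sigma_t$ is plugged in to obtain the stated variance. The only differences are cosmetic. You expand the composite $h = g\circ\sigma$ once rather than in two stages, and you say more explicitly than the paper that both $g'$ and $\mathcal{V}$ are evaluated at $\sigma^\star_t$ before being replaced by their values at $\hat\sigma_t$ at higher-order cost.
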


\begin{proof}
\textit{Step 1: Implicit differentiation.}
Define the map $C \mapsto \sigma^*(C)$ by
$F_t(\sigma^*(C)) = C_{BS}(\sigma^*(C)) - C = 0$.
Since $\mathcal{V}_t \neq 0$, the implicit function theorem guarantees that
$\sigma^*(C)$ is continuously differentiable in a neighbourhood of $C^\star_t$,
with derivative
\[
\frac{d\sigma^*}{dC}\bigg|_{C=C^\star_t}
= \frac{1}{\mathcal{V}_t}.
\]

\textit{Step 2: First-order expansion of $\hat\sigma_t$.}
Because $\hat\sigma_t = \sigma^*(C_{mkt,t}) = \sigma^*(C^\star_t + \varepsilon_t)$,
a first-order Taylor expansion gives
\[
\hat\sigma_t - \sigma^\star_t
= \frac{\varepsilon_t}{\mathcal{V}_t} + O(\varepsilon_t^2).
\]
Taking conditional variance and discarding the $O(\varepsilon_t^2)$ remainder:
\[
\operatorname{Var}(\hat\sigma_t\mid\mathcal{F}_t)
\approx \frac{\varsigma_{C,t}^2}{\mathcal{V}_t^2}.
\]

\textit{Step 3: Delta method for $g(\hat\sigma_t)$.}
Since $g$ is continuously differentiable, a first-order expansion of
$P_{fair,t} = g(\hat\sigma_t)$ around $\sigma^\star_t$ gives
\[
P_{fair,t} - g(\sigma^\star_t)
= g'(\sigma^\star_t)\,(\hat\sigma_t - \sigma^\star_t) + O(\varepsilon_t^2).
\]
Taking conditional variance and replacing $\sigma^\star_t$ with the plug-in
$\hat\sigma_t$:
\[
\operatorname{Var}(P_{fair,t}\mid\mathcal{F}_t)
\approx \bigl(g'(\hat\sigma_t)\bigr)^2\,\operatorname{Var}(\hat\sigma_t\mid\mathcal{F}_t)
= \frac{\bigl(g'(\hat\sigma_t)\bigr)^2\,\varsigma_{C,t}^2}{\mathcal{V}_t^2}. \qedhere
\]
\end{proof}

\begin{proposition}[Tick-noise contribution to discrepancy variance]
\label{prop:tick-noise}
Suppose the observed prediction-market price satisfies
$P^{obs}_{poly,t} = P^\star_{poly,t} + \varepsilon_t$, where
$\varepsilon_t \sim \mathrm{Unif}[-\Delta_P/2,\, \Delta_P/2]$ is independent
of the option-implied pricing error. Then
\[
\mathbb{E}[\varepsilon_t] = 0, \qquad
\operatorname{Var}(\varepsilon_t) = \frac{\Delta_P^2}{12},
\]
and the discrepancy variance satisfies the first-order approximation
\[
\operatorname{Var}(D_t) \approx \operatorname{Var}(P_{fair,t}) + \frac{\Delta_P^2}{12}.
\]
\end{proposition}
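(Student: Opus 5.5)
The plan splits into two parts: the moments of the uniform tick noise, then the variance decomposition of $D_t$.

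\emph{Moments of the tick noise.} For $\varepsilon_t \sim \mathrm{Unif}[-\Delta_P/2,\Delta_P/2]$ the density is $1/\Delta_P$ on that interval. The mean vanishes because the density is symmetric about zero, so the integral of $x/\Delta_P$ over the interval is zero. For the variance, since the mean is zero,
\[
\operatorname{Var}(\varepsilon_t)=\mathbb{E}[\varepsilon_t^2]=\frac{1}{\Delta_P}\int_{-\Delta_P/2}^{\Delta_P/2}x^2\,dx=\frac{2}{3\Delta_P}\Bigl(\frac{\Delta_P}{2}\Bigr)^3=\frac{\Delta_P^2}{12}.
\]

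\emph{Decomposition of $D_t$.} I write the observed discrepancy as
\[
D_t = P^\star_{poly,t} + \varepsilon_t - P_{fair,t}
\]
and expand the variance:
\[
\operatorname{Var}(D_t)=\operatorname{Var}(P^\star_{poly,t}-P_{fair,t})+\operatorname{Var}(\varepsilon_t)+2\operatorname{Cov}(P^\star_{poly,t}-P_{fair,t},\varepsilon_t).
\]
The cross term vanishes by the stated independence of $\varepsilon_t$ from the option-implied pricing error. I will also read that assumption as independence of $\varepsilon_t$ from the latent price $P^\star_{poly,t}$, and state this interpretation explicitly.

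\emph{Reducing the first term.} It remains to show $\operatorname{Var}(P^\star_{poly,t}-P_{fair,t})\approx\operatorname{Var}(P_{fair,t})$. This is where the "first-order approximation" qualifier enters. I would condition on $\mathcal{F}_t$, treating the latent efficient Polymarket price $P^\star_{poly,t}$ as $\mathcal{F}_t$-measurable, i.e.\ fixed at the hourly snapshot. Under that convention, the only randomness in $P_{fair,t}$ comes from the option-mid noise, whose variance is given to first order by Proposition~\ref{prop:iv-delta}. Dropping the $O(\varepsilon^2)$ remainders as in that proposition gives
\[
\operatorname{Var}(D_t)\approx\operatorname{Var}(P_{fair,t})+\frac{\Delta_P^2}{12}.
\]

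\emph{Main obstacle.} The hard step is not computational. It is making precise which variance is meant: with $P^\star_{poly,t}$ held fixed, the statement is exact up to the delta-method remainder. Unconditionally, it would require the additional term $\operatorname{Var}(P^\star_{poly,t})$ and the corresponding covariance, which the statement implicitly neglects. I would flag this interpretation explicitly in the proof.
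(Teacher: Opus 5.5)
Your proposal is correct and follows the paper's proof essentially step for step. Both arguments compute the uniform moments (the paper substitutes $a=\Delta_P/2$, you integrate directly), decompose $D_t$ and remove the cross term by independence of $\varepsilon_t$ from $P^\star_{poly,t}-P_{fair,t}$, and reduce $\operatorname{Var}(P^\star_{poly,t}-P_{fair,t})\approx\operatorname{Var}(P_{fair,t})$ by treating $P^\star_{poly,t}$ as fixed given $\mathcal{F}_t$. Flagging the conditional-versus-unconditional reading makes the paper's Step 3 slightly sharper. Note also that once $P^\star_{poly,t}$ is held fixed, the identity is exact in terms of $\operatorname{Var}(P_{fair,t})$, so you do not actually need the delta-method remainder of Proposition~\ref{prop:iv-delta}.
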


\begin{proof}
\textit{Step 1: Moments of the uniform distribution.}
Let $a = \Delta_P/2$, so $\varepsilon_t \sim \mathrm{Unif}[-a,a]$ with
density $f(\varepsilon) = 1/(2a)$ on $[-a,a]$. Then
\[
\mathbb{E}[\varepsilon_t]
= \int_{-a}^{a} \varepsilon\,\frac{1}{2a}\,d\varepsilon
= \frac{1}{2a}\left[\frac{\varepsilon^2}{2}\right]_{-a}^{a}
= \frac{1}{2a}\cdot 0
= 0.
\]
Since $\mathbb{E}[\varepsilon_t]=0$,
\[
\operatorname{Var}(\varepsilon_t)
= \mathbb{E}[\varepsilon_t^2]
= \int_{-a}^{a} \varepsilon^2\,\frac{1}{2a}\,d\varepsilon
= \frac{1}{2a}\left[\frac{\varepsilon^3}{3}\right]_{-a}^{a}
= \frac{1}{2a}\cdot\frac{2a^3}{3}
= \frac{a^2}{3}
= \frac{\Delta_P^2}{12}.
\]

\textit{Step 2: Variance decomposition.}
The discrepancy is
\[
D_t = P^{obs}_{poly,t} - P_{fair,t}
= \bigl(P^\star_{poly,t} - P_{fair,t}\bigr) + \varepsilon_t.
\]
By independence of $\varepsilon_t$ from $P^\star_{poly,t} - P_{fair,t}$,
\[
\operatorname{Var}(D_t)
= \operatorname{Var}(P^\star_{poly,t} - P_{fair,t})
  + \operatorname{Var}(\varepsilon_t).
\]

\textit{Step 3: Approximation.}
At the observation level, the dominant source of variation in
$P^\star_{poly,t} - P_{fair,t}$ is the estimation uncertainty in $P_{fair,t}$
(Proposition~\ref{prop:iv-delta}), since the latent continuous prediction-market
price $P^\star_{poly,t}$ is treated as locally fixed conditional on $\mathcal{F}_t$.
Under this approximation,
\[
\operatorname{Var}(P^\star_{poly,t} - P_{fair,t})
\approx \operatorname{Var}(P_{fair,t}),
\]
and therefore
\[
\operatorname{Var}(D_t) \approx \operatorname{Var}(P_{fair,t}) + \frac{\Delta_P^2}{12}. \qedhere
\]
\end{proof}

\bibliographystyle{plainnat}
\bibliography{refs}

\end{document}